\newcolumntype{P}[1]{>{\centering\arraybackslash}p{#1}}
\newtheorem{theorem}{Theorem}
\newtheorem{remark}{Remark}
\newtheorem{definition}{Definition}
\newtheorem{lemma}[theorem]{Lemma}
\theoremstyle{definition}
\begin{document}


\title{Optimal Fault-Tolerant Dispersion on Oriented Grids}

\author{
  Rik Banerjee\footnote{Indian Statistical Institute, Kolkata, India, {\em rikarjya@gmail.com}. The work was done while Rik Banerjee was a master's student at ISI, Kolkata.} 
  \and Manish Kumar\footnote{Indian Institute of Technology, Madras, India, {\em manishsky27@gmail.com}. Part of the work was done while Manish Kumar was a PhD student at ISI, Kolkata.}
  \and Anisur Rahaman Molla\footnote{Indian Statistical Institute, Kolkata, India, {\em anisurpm@gmail.com}}
}

\maketitle              

\begin{abstract}
Dispersion of mobile robots over the nodes of an anonymous graph is an important problem and turns out to be a crucial subroutine for designing efficient algorithms for many fundamental graph problems via mobile robots. In this problem, starting from an arbitrary initial distribution of $n$ robots across the $n$ nodes, the goal is to achieve a final configuration where each node holds at most one robot. This paper investigates the dispersion problem on an oriented grid, considering the possibility of robot failures (crashes) at any time during the algorithm's execution. We present a crash-tolerant dispersion algorithm that solves the dispersion problem on an anonymous oriented grid in $O(\sqrt{n})$ time and using $O(\log n)$ bits of memory per robot. The algorithm is optimal in terms of both time and memory per robot. We further extend this algorithm to deal with weak Byzantine robots. The weak Byzantine fault dispersion algorithm takes optimal $O(\sqrt{n})$ rounds but requires $O(n\log n)$ bits of memory per robot. 
\end{abstract}
{\bf Keywords:} Mobile agents, Mobile robots, Grid graph, Mess network, Crash-fault robots, Robot's dispersion, Distributed algorithm

\section{Introduction}\label{sec: introduction}

The distribution of autonomous mobile robots for achieving coverage across an area is a highly pertinent challenge within distributed robotics, as highlighted in \cite{HABFM02,HABFM03}. More recently, this issue has been framed in the context of graphs in the following manner: In a scenario where $k$ robots are initially situated on the nodes of an $n$-node graph, the robots undertake autonomous repositioning to achieve a final configuration wherein each robot occupies a distinct graph node (referred to as the {\em dispersion} problem) 
\cite{AM18}. 
This problem holds practical significance across various applications, such as the repositioning of self-driving electric cars (analogous to robots) to available charging stations (equivalent to nodes). This assumption involves the cars utilizing intelligent communication methods to locate unoccupied charging stations \cite{AM18,KA19}. Furthermore, the problem's importance stems from its interconnectedness with numerous other extensively researched challenges in autonomous robot coordination, including exploration, scattering, and load balancing \cite{AM18,KA19}.

The dispersion of mobile robots has garnered attention across various graph classes, including trees \cite{AM18,KA19}, rings \cite{AM18, KA19, MMM21}, arbitrary graphs \cite{AM18, KMS19, KMS20, KMS22, KS21, SSKM20}, dynamic graphs \cite{KMS20-dynamic}, directed graphs \cite{IPS22}. In the grid graph, the problem was explored by Kshemkalyani et al. \cite{KMS20}, but they considered oriented grid (called planar grid) and non-faulty robots exclusively. Orientation plays a pivotal role in the symmetric graphs, Barrière et. al. studied the scattering of autonomous mobile robots in the grid \cite{barriere2011uniform} and Becha et al. constructed a sense of direction by mobile robots in a torus \cite{becha2007optimal}.

{\bf Oriented vs Unoriented Grid:} In an oriented grid (see Figure~\ref{fig:planar_grid}), the ports are organized in such a way that allows a single robot to traverse the grid along a path with a clear sense of direction. When a robot enters a node via an incoming port, it simply needs to select the second port (i.e., leave one port after the incoming port and select the next port) as its outgoing port to continue moving in the same direction. In Figure~\ref{fig:planar_grid}, as a robot enters from port~1 at node $u$, it has the sense that the straight path leads from port~3 to node $v$. However, this straightforward approach is not applicable in an unoriented grid where the ports are interconnected arbitrarily, as depicted in Figure~\ref{fig:non_planar_grid}. In such a scenario, robots cannot distinguish whether they are moving in the same direction (i.e., along a row or column) or traversing in a cycle or zigzag manner across the unoriented grid. In Figure~\ref {fig:non_planar_grid}, a robot enters from the port~1 at node $x$. It is tough to decide based on the edges which path leads in the straight direction to $y$. In the grid, it appears that port~4 leads to the $y$ unlike port~3.
Consequently, it's not feasible to adapt the algorithm proposed in \cite{KMS20} to work in an unoriented grid.






\begin{figure}[htbp]
    \centering
    \begin{minipage}{0.45\textwidth}
        \centering
\begin{tikzpicture}[scale=1.3]
    \node[draw,circle,fill=blue,inner sep=0.07cm] (A1) at (0,0) {};
    \node[draw,circle,fill=blue,inner sep=0.07cm] (A2) at (1,0) {};
    \node[draw,circle,fill=blue,inner sep=0.07cm] (A3) at (2,0) {};
    \node[draw,circle,fill=blue,inner sep=0.07cm] (A4) at (3,0) {};
    
    \node[draw,circle,fill=blue,inner sep=0.07cm] (B1) at (0,1) {};
    \node[draw,circle,fill=blue,inner sep=0.07cm,label=below:$u$] (B2) at (1,1) {};
    \node[draw,circle,fill=blue,inner sep=0.07cm,label=below:$v$] (B3) at (2,1) {};
    \node[draw,circle,fill=blue,inner sep=0.07cm] (B4) at (3,1) {};
    
    \node[draw,circle,fill=blue,inner sep=0.07cm] (C1) at (0,2) {};
    \node[draw,circle,fill=blue,inner sep=0.07cm] (C2) at (1,2) {};
    \node[draw,circle,fill=blue,inner sep=0.07cm] (C3) at (2,2) {};
    \node[draw,circle,fill=blue,inner sep=0.07cm] (C4) at (3,2) {};
    
    \node[draw,circle,fill=blue,inner sep=0.07cm] (D1) at (0,3) {};
    \node[draw,circle,fill=blue,inner sep=0.07cm] (D2) at (1,3) {};
    \node[draw,circle,fill=blue,inner sep=0.07cm] (D3) at (2,3) {};
    \node[draw,circle,fill=blue,inner sep=0.07cm] (D4) at (3,3) {};
    
    \draw (A1) -- (A2);
    \draw (A2) -- (A3);
    \draw (A3) -- (A4);
    
    \draw (B1) -- (B2);
    \draw (B2) -- node[right = 10, above,red]{1}(B3);
    \draw (B3) -- (B4);

    \draw (B1) --node[right = 10, above,red] {1} (B2);
    \draw (B2) node[right = 10, above,red]{3}--(B3);
    \draw (B3) node[right = 10, above,red]{3}-- (B4);
    
    \draw (C1) -- (C2);
    \draw (C2) -- (C3);
    \draw (C3) -- (C4);
    
    \draw (D1) -- (D2);
    \draw (D2) -- (D3);
    \draw (D3) -- (D4);
    
    \draw (A1) -- (B1);
    \draw (A2) -- (B2);
    \draw (A3)node[right=5, above=16,red]{2} -- (B3);
    \draw (A4) -- (B4);

    \draw (A2) node[right=5, above=16,red]{2}-- (B2);
    \draw (A3) -- (B3);
    
    \draw (B1) -- (C1);
    \draw (B2)node[right, above=2, red]{4} -- (C2);
    \draw (B3) node[right, above=2, red]{4}-- (C3);
    \draw (B4) -- (C4);

    \draw (B2) -- (C2);
    \draw (B3) -- (C3);
    
    \draw (C1) -- (D1);
    \draw (C2) -- (D2);
    \draw (C3) -- (D3);
    \draw (C4) -- (D4);
\end{tikzpicture}

    \caption{$16$ nodes oriented square grid.}
    \label{fig:planar_grid}
    \end{minipage}
    \hfill
    \begin{minipage}{0.45\textwidth}
        \centering
       \begin{tikzpicture}[scale=1.1]
  \node[draw,circle,fill=blue,inner sep=0.07cm] (n00) at (0,0) {};
  \node[draw,circle,fill=blue,inner sep=0.07cm] (n01) at (0,1) {};
  \node[draw,circle,fill=blue,inner sep=0.07cm] (n02) at (0,2) {};
  \node[draw,circle,fill=blue,inner sep=0.07cm] (n03) at (0,3) {};

  \node[draw,circle,fill=blue,inner sep=0.07cm] (n10) at (1,0) {};
  \node[draw,circle,fill=blue,inner sep=0.07cm,label=below:$x$] (n11) at (1,1) {};
  \node[draw,circle,fill=blue,inner sep=0.07cm] (n12) at (1,2) {};
  \node[draw,circle,fill=blue,inner sep=0.07cm] (n13) at (1,3) {};

  \node[draw,circle,fill=blue,inner sep=0.07cm] (n20) at (2,0) {};
  \node[draw,circle,fill=blue,inner sep=0.07cm,label=below:$y$] (n21) at (2,1) {};
  \node[draw,circle,fill=blue,inner sep=0.07cm] (n22) at (2,2) {};
  \node[draw,circle,fill=blue,inner sep=0.07cm] (n23) at (2,3) {};

  \node[draw,circle,fill=blue,inner sep=0.07cm] (n30) at (3,0) {};
  \node[draw,circle,fill=blue,inner sep=0.07cm] (n31) at (3,1) {};
  \node[draw,circle,fill=blue,inner sep=0.07cm] (n32) at (3,2) {};
  \node[draw,circle,fill=blue,inner sep=0.07cm] (n33) at (3,3) {};

  \draw (n00) to[out=95, in=75] (n01);
  \draw (n00) to[out=225, in=95] (n10);
  \draw (n01) to[out=345, in=155] (n02);
  \draw (n01) to[out=265, in=275] node[right = 10, above,red]{1} (n11);
  \draw (n02) to[out=185, in=195] (n03);
  \draw (n02) to[out=140, in=115] (n12);
  \draw (n13) to[out=345, in=235] (n03);
  \draw (n10) to[out=245, in=355]node[above=10,red]{2} (n11);
  \draw (n10) to[out=145, in=275] (n20);
  \draw (n11) node[right, above =1 ,red]{4} to[out=415, in=295]  (n12);
  \draw (n11) node[right = 5,red]{3} to[out=125, in=215]  (n21);
  \draw (n11) node[right = 20,red]{1} to[out=125, in=215]  (n21);
  \draw (n12) to[out=435, in=235] (n13);
  \draw (n12) to[out=445, in=255] (n22);
  \draw (n13) to[out=455, in=275] (n23);
  \draw (n20) to[out=465, in=295] node[right=5,above=5,red]{2}(n21);
  \draw (n20) to[out=475, in=315] (n30);
  \draw (n21) node[left =2, above =3 ,red]{4}to[out=485, in=325] (n22);
  \draw (n21) node[right = 5, above,red]{3}to[out=145, in=345] (n31);
  \draw (n22) to[out=245, in=365] (n23);
  \draw (n22) to[out=345, in=385] (n32);
  \draw (n23) to[out=445, in=405] (n33);
  \draw (n30) to[out=45, in=425] (n31);
  \draw (n31) to[out=145, in=445] (n32);
  \draw (n32) to[out=245, in=4655] (n33);
\end{tikzpicture}

    \caption{$16$ nodes unoriented square grid.}
    \label{fig:non_planar_grid}
    \end{minipage}
\end{figure}

We provide a novel deterministic algorithm for dispersion on oriented grid graphs. Our algorithm works with faulty robots in an oriented grid graph, in which, a faulty robot can crash at any time during the execution of the protocol and never respond after crashing. Among the best-known results on dispersion \cite{KS21, CKMS23}, the paper \cite{KS21} considered non-faulty robots, and \cite{CKMS23, PSM23} considered faulty robots. However, applying the arbitrary graph results of \cite{KS21,CKMS23, PSM23} to grid graphs yields memory-optimal solutions but not time-optimal ones.
The dispersion problem in grid graphs was initially addressed by Kshemkalyani et al. in \cite{KMS20}, focusing on an oriented grid (termed as a planar grid) with non-faulty robots. This unidirectional movement allows for a solution to the dispersion problem in $O(\min(k, \sqrt{n}))$ time, utilizing $O(\log k)$ bits of memory per robot in the local communication model. Recently, \cite{BKM24} examined the dispersion problem on an unoriented grid, accounting for both faulty and non-faulty robots within the local communication model. This paper investigates the dispersion problem in oriented grids with faulty robots. Our results and a comparison with the closely related work \cite{KMS20, BKM24} are given in Table~\ref{tab:results} for a quick reference.

\begin{table*}[t]
\centering
\begin{tabular}{|m{3cm}|m{1.8cm}|m{1.8cm}|m{1.6cm}|m{1.8cm}|}
\hline
\multicolumn{5}{|c|}{\sc Comparative Analysis of Dispersion on Grid Graph} \\
\hline
Algorithm & Grid Type & Fault Type & Time (in rounds) & Memory (in bits)\\
\hline
Kshemkalyani et al. \cite{KMS20} & Oriented & No Fault & $O(\sqrt{n})$ & $O(\log k)$\\
Banerjee et al.~\cite{BKM24} & Unoriented & No Fault & $O(\sqrt{n})$ & $O(\log n)$\\
Banerjee et al.~\cite{BKM24} & Unoriented & Crash & $O(\sqrt{n}\log n)$ & $O(\sqrt{n} \log n)$\\
\textbf{Section~\ref{sec: faulty-planar}} & Oriented & Crash & $O(\sqrt{n})$ & $O(\log n)$\\
\textbf{Section~\ref{sec: faulty-planar}} & Oriented & Byzantine & $O(\sqrt{n})$ & $O(n\log n)$\\
\hline
\end{tabular}
\caption{Dispersion results for $n$ robots on an $n$-node square grid. In the faulty setup, up to $f \leq n$ robots may be faulty.}
\label{tab:results}
\end{table*}

\noindent \textbf{Our Contributions.} We present algorithms for the dispersion of $n$ robots in any arbitrary anonymous square grid in faulty robots' dispersion in the oriented grid.  The algorithm is deterministic. Specifically, we have the following contributions.
\begin{itemize}
    \item \textbf{Crash-Fault robots' dispersion on an oriented grid:} We present dispersion algorithm of $n$ robots in an oriented square grid of $n$ nodes having $f$ faulty robots such that $f \leq n$ that terminates in $O(\sqrt{n})$ rounds and uses memory bits $O(\log n)$ at each robot.

    \item \textbf{Byzantine-Fault robots' dispersion on an oriented grid:} We present  dispersion algorithm of $n$ robots in an oriented square grid of $n$ nodes having  $f$ Byzantine robots such that $f \leq n$ that terminates in $O(\sqrt{n})$ rounds and uses $O(n\log n)$ bits memory at each robot.
\end{itemize}

 Notice that $\Omega(\sqrt{n})$ is the trivial lower bound for round complexity for dispersing $n$ robots on an $n$-node grid. A robot requires the diameter of the graph time (round) to reach from one end to the other, where $2\sqrt{n}$ is the diameter of the grid. While $\Omega(\log n)$ is the lower bound for each robot's memory (in bits)\cite{AM18}. Therefore, our result is time and memory-optimal.

\medskip
\noindent \textbf{Challenges and Techniques.} Recall that robots in the oriented grid possess a sense of direction. Firstly, a robot can reach the boundary nodes (nodes with degree $3$) from the internal of the grid by moving in a straight direction. If the robots are faulty then gathering at a single corner becomes challenging due to the faulty nature of the robots. The single robot can crash and corner robots might wait for an indefinite time.

 We use two different techniques w.r.t to the value of $n$ to overcome these challenges, as follows. If $n$ is odd, then $\sqrt{n}$ is also odd. In that case, each robot reaches the middle of the boundary and then the center of the grid. Thus, all the robots gather at a single point of the grid. From there onwards, robots partition the grid into 4 equal parts and get dispersed. On the other hand, for the even value of $n$, there does not exist a single node as the center of the grid. In that case, each corner allows $n/4$ robots to stay and other robots are sent to the next corner. This approach ensures that there does not exist more than $n/4$ robot at a single corner. After that robot disperses from each corner in $n/4$-th of the grid having the dimension $\sqrt{n}/2 \times \sqrt{n}/2$ from each corner.

\noindent The rest of the paper is organized as follows.

\medskip

\noindent \textbf{Paper Organization.} Section~\ref{sec: model} states our distributed computing model. Section~\ref{sec: related_work} discusses the closely related work. Section~\ref{sec: faulty-planar} discusses the dispersion of faulty robots on the oriented grid. Finally, Section~\ref{sec: conclusion} concludes the paper with some interesting problems.
 
\section{Model of Computation}\label{sec: model}

\noindent{\bf Graph:} We consider an unweighted, undirected graph $G = (V, E)$ which is a square grid of $n=\sqrt{n}\times \sqrt{n}$ nodes embedded in $2$-dimensional Euclidean plane such that $\mid V \mid = n$ and $\mid E \mid = m$, where $V$ is the set of nodes and $E$ is the set of edges. $G$ is a connected graph having nodes with either degrees $2$ or $3$ or $4$. Nodes with degrees $2$, $3$, and $4$ are considered to be corner nodes, boundary nodes, and internal nodes, respectively. A square grid consists of $4$ corner nodes, $4\sqrt{n}-8$ boundary nodes, and $n-4\sqrt{n}+4$ internal nodes. These nodes are memoryless and resourceless means, unable to store any information and perform computation on them. Furthermore, nodes are anonymous such that nodes do not have IDs (identifiers) but each incident edge is uniquely identified by a labeled port number from $[1, \delta]$ where $\delta$ is the degree of the node. The nodes connected via an edge are termed neighboring nodes and are considered to be one hop away from each other. Additionally, each vertex of an edge is assigned a port number, and the port numbers of the two vertices on the same edge are independent of each other.\\

\noindent{\bf Robots:}  The set of robots $R = \{ r_1, r_2, \dots, r_n\}$ represents a collection of $n$ robots that are located across the graph $G$ at one or more nodes. Robots do not stay at the edge and stay only on the nodes of the graph $G$. Two or more robots situated at a node are called co-located robots and can communicate with each other. This model is known as {\em local communication model} \cite{AM18, KA19}. On the other hand, if the robots are allowed to communicate with any other robots in the graph (need not be co-located), the model is known as the {\em global communication model} \cite{KMS22,KMS20}. However, in this paper, we consider only the local communication model. Each robot contains a unique ID consisting of $O(\log n)$ bits. A robot can move from one node to another from the port if the nodes are connected to each other via an edge. Furthermore, it remains unaware of the other node of the graph until it visits that node. Each robot consists of some memory to store information and computation. We considered the computation time to be negligible as compared to the movement time of the robots from one node to another. A robot performs the "Communicate-Compute-Move" operation which is defined below.\\

\noindent{\bf Cycle:}   We consider a synchronous setting, every robot is synchronized to a common clock and movement from one node to another is complete in a one-time cycle or round. A robot $r_i \in R$ remains active in the "Communicate-Compute-Move" (CCM) cycle in a synchronous setting. Following are the three operations carried out by the robots:
\begin{itemize}
    \item {\bf Communicate:} $r_i$ can interact with the co-located robots and view the memory of a different robot, say $r_j \in R$.
    \item {\bf Compute:} $r_i$ can perform any required computation by using the data gathered during the "communicate" phase. This involves choosing a (potentially) port to leave $v_i$ and selecting the data to be saved in the robot $r_j$.
    \item {\bf Move:} $r_i$ writes new information (if any) in the memory of a robot $r_j$ at $v_i$,  and exits $v_i$ using the computed port to reach a neighbor node of $v_i$. Retaining a piece of information for a robot is considered writing it into its memory.
\end{itemize}

\noindent{\bf Crash Faults:} We consider the crash failure setup where a robot may fail by {\em crashing} at any time during the execution of the algorithm. The crashed robot is not recoverable and once a robot crashes it immediately loses all the information stored in itself, as if it was not present at all. Further, a crashed robot is not visible or sensible to other robots. We assume there are at most $f$ faulty robots such that $f\leq k$.\\

\noindent{\bf Time and Memory Complexity:} 
The time complexity conveys the number of discrete rounds or cycles taken before achieving dispersion. Memory complexity is the number of bits required to store each robot to successfully execute the algorithm. Our goal is to solve dispersion as fast as possible and keep the memory per robot as low as possible.\\

Below, we formally define 
faulty robots' dispersion problem in a graph. In this paper, we study the problem on an oriented grid graph.
\begin{definition}[\sc Faulty Robots Dispersion]
\label{def:FT-dispersion}
Given $n$ robots, up to $f\leq n$ of which are faulty, initially placed arbitrarily over the nodes of an $n$-node graph, the (non-faulty) robots re-position themselves autonomously such that each node has at most one non-faulty robot on it and subsequently terminate. 
\end{definition}

\begin{definition}[\sc Byzantine Robots Dispersion]
     Given $n$ robots, up to $f\leq n$ of which are Byzantine, initially placed arbitrarily on a graph of $n$ nodes, the non-Byzantine robots must re-position themselves autonomously to reach a configuration where each node has at most one non-Byzantine robot on it and subsequently terminate.
\end{definition}

\section{Related Work}\label{sec: related_work}
In this section, we discuss work related to deterministic algorithms for robot dispersion. In 2018, Moses Jr. et al. \cite{AM18} introduced the dispersion problem, where they presented a dispersion algorithm for different types of graphs. They argued a lower bound of $\Omega(\log n)$ for each robot's memory requirement. Later in 2019,  the lower bound was made more specific w.r.t. $k$ (number of robots) with $\Omega(\log (\max(k, \Delta)))$ in \cite{KMS19} where $\Delta$ is the highest degree of the graph. They showed that for an arbitrary graph, a deterministic algorithm requires $\Omega(D)$ rounds, where $D$ is the diameter of the graph. They also presented two techniques for the dispersion of robots on an arbitrary graph, one takes $O(\log n)$ memory and $O(mn)$ time while the other requires $O(n \log n)$ memory and $O(m)$ time.

In 2019, Kshemkalyani and Ali discussed several techniques for both synchronous and asynchronous models \cite{KA19}. They used $O(k \log \Delta)$ memory and $O(\min(m, k\Delta))$ rounds to solve the dispersion problem in the synchronous model. They presented many algorithms for the asynchronous model, one of which requires $O(\Delta D)$ rounds and $O(D \log \Delta)$ memory while another requires $O(\max(\log k,\log \Delta))$ memory with $O((m-n)k)$ time complexity. In the same year 2019, Kshemkalyani et al. improved the time complexity to $O(\min(m, k\Delta) \log k)$ at the cost of $O(\log n)$ memory \cite{KMS19}, but the knowledge of $m, n, k,$ and $\Delta$ were required by the robots in advance. In 2020, Takahiro et al. achieved the same complexity without the knowledge of the parameter $m, n, k,$ and $\Delta$ \cite{SSKM20}. Later in 2021, Kshemkalyani and Sharma improved the time complexity to $O(\min(m, k\Delta))$ with the graph-specific termination of the algorithm \cite{KS21}. 


In 2020, the dispersion problem is studied for faulty robot configurations. In \cite{MMM20}, Molla et al., dispersed the robots in  weak Byzantine settings (robots that behave randomly but cannot change their IDs) and examined the problem of unidentifiable rings. In \cite{MMM21} authors suggested several dispersion methods, some of which were tolerant of strong Byzantine robots (robots that behave arbitrarily and can even modify their IDs). Their algorithms are primarily based on the concept of assembling the robots at a root vertex, using them to create an isomorphic map of $G$, and then scattering them throughout $G$ by a predetermined protocol. 
Chand et. al. \cite{CKMS23} provided two algorithms on the mobile robots' dispersion on arbitrary graphs in the presence of crash fault robots with optimal memory, i.e., $O(\log (k+\Delta))$ bits. The first algorithm which has a rooted initial configuration has a time complexity of $O(k^2)$. On the other hand, the second algorithm which has an arbitrary initial configuration has a time complexity of $O((f + l) \cdot \min(m, k\Delta, k^2))$  when all the parameters are known to the robots, where $l$ is the cluster of robots in initial configuration. Recently, Pattanayank et al. \cite{PSM23} improved the result with the trade-off of one metric for the sake of other metrics, i.e., time and memory. They showed the time complexity $O(min\{m, k\Delta\})$ for $O(k \log(k+\Delta))$ bits complexity.
 
In the grid graph, the dispersion problem was first studied by Kshemkalyani et al. in \cite{KMS20}. They consider an oriented grid (which they called planar grid) and non-faulty robots. In an oriented grid a single robot can easily move through a path of the grid in the same direction using the ports' ordering at the nodes. This moving in one direction makes it simpler for the robots to solve the dispersion problem in $O(\min(k, \sqrt{n}))$ time with each robot requiring only $O(\log k)$ bits of memory in the local communication model. They also studied the problem in the global communication model, reducing the round complexity to $O(\sqrt{k})$ rounds. Another paper \cite{BKM24} studied the dispersion problem on an unoriented grid very recently, considering both faulty and non-faulty robots in the local communication model. In this paper, we study the dispersion problem in the oriented grid with faulty robots. A quick comparison of results is provided in Table~\ref{tab:results}.

\section{Dispersion of Faulty Robots on Oriented Grid}\label{sec: faulty-planar}
In this section, we present a deterministic algorithm for the dispersion of $n$ mobile robots, of which $f \leq n$ robots are faulty, on an oriented grid. Our primary goal is to minimize both the round and memory complexity, aiming to match the bounds of non-faulty robot dispersion on an oriented grid. In fact, our algorithm finishes in the optimal round and memory complexity: $O(\sqrt{n})$ rounds and $O(\log n)$ bits of memory per robot.

\subsection{Algorithm for Crash-Fault Robots}
The dispersion algorithm designed for non-faulty robots on an oriented grid in \cite{KMS20} cannot be adapted to work for faulty robots (crash-fault) in the same grid graph. Due to faulty nature of the robots, dispersion is more challenging as compared to non-faulty robots. The main idea of the algorithm in \cite{KMS20} is to first gather the robots at a single corner node of the grid, from which they can easily disperse. This gathering approach relies on using the robots' maximum and minimum IDs to break symmetry. However, this approach cannot be adopted in the faulty setup, as the robot with the maximum (or minimum) ID may crash during execution. To address this, we develop a new approach that allows the robots to operate independently in the faulty setup. Instead of gathering at a single corner node, depending on the size of the grid, the robots will: (i) either meet at the middle or center node in the case of an odd-sized grid, or (ii) partition the grid into four parts in the case of an even-sized grid and disperse. We refer to this algorithm as {\em Oriented Grid Dispersion with Faulty Robots}. Below, we explain the algorithm in detail. 
 
Note that the value of $n$ is not required to be known a priori by the robots; it can be computed during the execution of the algorithm. After reaching the corner nodes, the robots can move from one corner to another in one direction to determine the value of $\sqrt{n}$, and thus calculate $n$. This process will take an additional $\sqrt{n}$ rounds, which keeps the total round complexity unchanged. 
\medskip

\noindent \textbf{Gathering at the four corners of the grid:} A robot, located at the internal nodes, i.e., a node with degree $4$, initiates from the minimum port number from its location and reaches the boundary node, i.e, a node with degree $3$ of the grid by moving in a straight line or in the same direction. To move in one direction, a robot needs to select the second port (i.e., skip one port after the incoming port and choose the next port) as its outgoing port. It takes at most $\sqrt{n}-1$ rounds to reach the boundary of the grid. A robot at a boundary node moves along the boundary of the grid to reach a corner node. 
To do this, the robot selects the smallest port number from its starting location (node). If that port leads to the degree three nodes then move along the boundary in one direction until it eventually reaches a corner (i.e., a node with degree 2). Otherwise, the robot comes back to the boundary to select the smallest port number among the other two and moves along the boundary in one direction to reach a corner.
This also takes at most $\sqrt{n}$ rounds. Within $2\sqrt{n}$ rounds, all the robots gather at the four corners of the grid. After reaching the corner nodes, a robot moves to another in one direction to determine the value of $n$. Each robot requires $O(\log n)$ bits of memory to keep track of the number of rounds that have passed and their own ID. Note that each robot executes this process independently. 

\noindent \textbf{Dispersion for even number of node:} For the even-sized grid, the goal is to distribute the robots among the corners such that at most $n/4$ robots are present at a corner. After gathering at the corner nodes (previous stage), if a corner has more than $n/4$ robots, the algorithm keeps $n/4$ lowest ID robots at that corner and the rest of the robots move to another corner along the boundary. If any corner receives some extra robots from another corner then those robots move to the other corner. All other robots whose ID does not appear in first $n/4$ robots considered as extra robots. To move over the boundary of the grid and access all the corners takes $3\sqrt{n}$ rounds in the worst case. Therefore, after $6\sqrt{n}$ rounds ($2\sqrt{n}$ rounds used to reach the corner and $\sqrt{n}$ rounds to find the size of the grid), from the corner ($C_r$), $\frac{2C_r}{\sqrt{n}}$ (at most $\sqrt{n}/2$) robot move in each column with the help of the boundary nodes (including itself) in the direction of the minimum port number, if available. Each robot $r_u$ takes at most $6.5 \sqrt{n}$ rounds to reach their respective column based on their IDs, lower the ID nearer the column number. In the next $\sqrt{n}/2$ rounds, each robot disperses on the grid in their respective column in the order of lower to higher ID. Therefore, it takes $7\sqrt{n}$ rounds to disperse the robots in the square grid when $n$ is even. In this way, a corner covers at most $1/4^{th}$ grid, i.e., $\sqrt{n}/2 \times \sqrt{n}/2$ to disperse the $n/4$ robots (if available) at those nodes. 

\noindent \textbf{Dispersion for an odd number of node:} After $3\sqrt{n}$ rounds, corner robots move to the middle of the boundary nodes by initiating in the direction of the minimum port number, i.e., $\lfloor\sqrt{n}/2\rfloor$ hop away from the corner. It takes $\sqrt{n}/2$ rounds. After $3.5\sqrt{n}$ rounds ($2\sqrt{n}$ rounds used to reach the corner and $\sqrt{n}$ rounds to know the size of the graph), robots at the middle of the boundary move $\lfloor\sqrt{n}/2\rfloor$ hop away from the boundary node at $90\degree$ and reach the center of the grid in another $\sqrt{n}/2$ rounds. Since $n$ is odd, therefore, $\sqrt{n}$ is also odd. Notice that $\sqrt{n}$ is an integer. This implies, that there exists a unique node at equal distance between any two non-diagonal corner nodes. Furthermore, if any two robots are situated in the middle of the boundary nodes at opposite boundary then also there exists a unique middle node between them. Therefore, all the non-faulty robots meet at the center of the grid which is equidistant from all the corners. Henceforth, there exists a unique center node in the grid. After $4\sqrt{n}$ rounds, the centered robots collectively move to the corner of the grid as discussed above in \textit{gathering at the corners of the grid}. Notice that it takes $\sqrt{n}$ rounds to reach the corner of the grid from the center of the grid since robots move in straight lines and reach the middle of the boundary in $\sqrt{n}/2$ rounds and further reach at the corner of the grid in next $\sqrt{n}/2$ rounds. After $5\sqrt{n}$ rounds, 
from a single corner possesses $S_r$ many robots, $\frac{S_r}{\sqrt{n}}$ robot move in each column with the help of the boundary nodes (including itself) in the direction of the minimum port number, if available. Each robot $r_u$ takes at most $\sqrt{n}$ additional rounds to reach their respective corner based on their IDs, lower the ID nearer the column number. In the next $\sqrt{n}$ rounds, each robot is dispersed on the grid in their respective column in the order of lower to higher ID. Therefore, it takes overall $7\sqrt{n}$ rounds to disperse the robots in the square grid when $n$ is odd.

\begin{algorithm}[ht]
\caption{Oriented Grid Dispersion with Faulty Robots} \label{alg: faulty-oriented}
\footnotesize{
\SetKwInOut{Input}{Input}
\SetKwInOut{Output}{Output}

\Input{A square grid of $n = \sqrt{n}\times \sqrt{n}$ nodes, where $f$ robots are faulty such that $f \leq n$. The robots are distributed on the grid.}
\Output{Dispersion of the robots over the nodes.}

\BlankLine
Each robot $r_u$ (with degree $4$) reaches the boundary (node with degree $3$) of the grid by moving in a straight line.\tcp*{Takes $O(\sqrt{n})$ rounds}

    \If{$r_u$ is placed at a boundary node}{
        $r_u$ moves to the nearest corner (node with degree 2) of the grid \label{line: oriented_at_the_corner}\tcp*{Takes $O(\sqrt{n})$ rounds}
    }

    Each robot $r_u$ moves from one corner to another corner and learns the value of $\sqrt{n}$. Hence, $n$. \tcp*{Takes $O(\sqrt{n})$ rounds.}

\If{$n$ is even}{
    $n/4$ lower ID robots stay at its corner and rest remaining robots move to another corner\tcp*{Takes $O(\sqrt{n})$ rounds}\label{line: fining_corner_even}
    After $6\sqrt{n}$ rounds, from the corner with $C_r$ many robots, $\frac{2C_r}{\sqrt{n}}$ robots move in half ($\sqrt{n}/2$) of the column, if available \label{line: oriented_even_column}
}

\If{$n$ is odd}{
    After $3\sqrt{n}$ rounds, corner robots move to the middle of the boundary, i.e., $\lfloor\sqrt{n}/2\rfloor$ hops away from the corner\tcp*{Takes $O(\sqrt{n})$ rounds}
    After $3.5\sqrt{n}$ rounds, robots at the middle of the boundary move $\lfloor\sqrt{n}/2\rfloor$ hops away from the boundary at $90^\circ$ and reach the center of the grid
    \tcp*{Takes $O(\sqrt{n})$ rounds}
    After $4\sqrt{n}$ rounds, the centered robots move to the corner of the grid
    
    After $5\sqrt{n}$ rounds, from a single corner possesses $S_r$ many robots, $\frac{S_r}{\sqrt{n}}$ robots move in each column, if available \label{line: oriented_odd_column}
}

Each column disperses the robots across the column if available\tcp*{Takes $O(\sqrt{n})$ rounds}
All the robots settle at a unique node.
}
\end{algorithm}

From the above discussion, we have the following results.

\begin{theorem}\label{thm: crash_fault}
    Consider an oriented square grid with $n$ nodes, where $n$ robots are placed arbitrarily over the nodes, and $f$ of these robots are faulty, such that $f \leq  n$. Then, {\sc Faulty Robots Dispersion} can be solved deterministically in $O(\sqrt{n})$ rounds and $O(\log n)$ bits of memory per robot.
\end{theorem}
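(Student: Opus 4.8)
The plan is to argue the theorem by verifying, in order, (i) correctness of the gathering phase, (ii) correctness of the dispersion phase for even and odd $n$ separately, (iii) the round bound, and (iv) the memory bound, with special attention to what crash faults can and cannot disrupt. First I would establish the key structural invariant that makes the algorithm crash-robust: because the grid is oriented, every movement rule used in Algorithm~\ref{alg: faulty-oriented} (``go straight until you hit a degree-$3$ node'', ``follow the boundary via the smallest admissible port to a corner'', ``walk $\lfloor\sqrt n/2\rfloor$ hops toward the center'') is a purely local, ID-independent decision that a robot computes from its own port labels and an internal round counter. Hence no robot's trajectory depends on the continued existence of any other robot; a crash removes one robot from the final configuration but never alters the route or timing of any surviving robot. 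This is the observation that replaces the max/min-ID symmetry breaking of \cite{KMS20}, which is exactly the part that crashes would break. I would state this as a lemma: after the gathering subroutine, every non-faulty robot sits at one of the four corners, and the multiset of corners occupied is determined solely by initial positions, within $2\sqrt n$ rounds (plus $\sqrt n$ to learn $n$).

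Next I would handle the two cases. For even $n$: I need the load-balancing claim that after the corner-to-corner redistribution no corner retains more than $n/4$ robots. Here the subtle point — and I expect this to be the main obstacle — is that ``extra'' robots are defined by \emph{not} appearing among the $n/4$ lowest IDs at a corner, and crashed robots have vanished, so a corner could in principle end up with fewer than its share; I would argue this is harmless, since each corner is responsible for a disjoint $\tfrac{\sqrt n}{2}\times\tfrac{\sqrt n}{2}$ subgrid containing $n/4$ nodes, and fewer robots only means some nodes of that quadrant stay empty, which is still a valid dispersed configuration (at most one robot per node). I would also need to check the quadrants are genuinely disjoint and jointly cover all $n$ nodes when $\sqrt n$ is even, and that the column-by-column spreading (lines~\ref{line: oriented_even_column} and the final loop) places surviving robots at distinct nodes — again using the round counter and ID ordering purely locally to assign distinct target nodes, so a crash just leaves a gap.

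For odd $n$: the crux is the uniqueness of the center node. Since $\sqrt n$ is an odd integer, between two opposite boundary midpoints (each $\lfloor\sqrt n/2\rfloor$ from its corner) there is exactly one node equidistant from all four corners; I would verify this by coordinates, $(\lceil\sqrt n/2\rceil,\lceil\sqrt n/2\rceil)$, and note all robots reach it deterministically, so after $4\sqrt n$ rounds every surviving robot is co-located there. From the center the robots re-run the corner-gathering and then the column dispersion exactly as in the even case but on the full grid, and I would point out the center being a single node means the post-gathering redistribution is trivial. Finally, the timing ledger: $2\sqrt n$ (reach corners) $+\,\sqrt n$ (learn $n$) $+\,O(\sqrt n)$ for each subsequent phase, summing to $7\sqrt n = O(\sqrt n)$ in both parities, which matches the $\Omega(\sqrt n)$ diameter lower bound; and memory: each robot stores its own $O(\log n)$-bit ID, a round counter bounded by $O(\sqrt n)$ (so $O(\log n)$ bits), and $O(1)$ phase/state flags, giving $O(\log n)$ bits total, matching the $\Omega(\log n)$ lower bound of \cite{AM18}. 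Invoking the earlier discussion for the phase-by-phase round accounting, these four pieces together give the theorem.
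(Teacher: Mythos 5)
Your proposal is correct and follows essentially the same route as the paper, whose proof of Theorem~\ref{thm: crash_fault} is precisely the phase-by-phase algorithm description (gather at corners in $2\sqrt{n}$ rounds, learn $n$ in $\sqrt{n}$ rounds, then the even/odd case split, totalling $7\sqrt{n}$ rounds) together with the $O(\log n)$ accounting for the ID and round counter. The only difference is that you make explicit, as a standalone invariant, why the gathering and geometric phases are crash-robust (trajectories are local and ID-independent, so a crash only removes a robot without perturbing survivors), a point the paper leaves implicit; this is a welcome sharpening rather than a different argument.
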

\begin{remark}
    It is easy to see that the result in Theorem~\ref{thm: crash_fault} also holds for $k \leq n$, where $k$ is the number of robots. 
\end{remark}

\subsection{Extension to Rectangular Grid}\label{sec: rectangular_grid}
We discussed the dispersion of faulty robots on the oriented square grid of $n = \sqrt{n} \times \sqrt{n}$ nodes in Algorithm~\ref{alg: faulty-oriented}. The Algorithm~\ref{alg: faulty-oriented} can be modified to the dispersion of the faulty robots on the rectangular grid of $n = \ell \times \frac{n}{\ell}$, where $\ell$ is the length and $n/\ell$ is the width of the rectangle such that $\ell, n/\ell>1$. In a square grid, length and width are of the same dimension, i.e., $\sqrt{n}$. In a rectangular grid, w.l.o.g, we consider that $\ell > n/\ell$. Henceforth, any operation that takes $\sqrt{n}$ rounds in a square grid would take at most $\ell$ rounds in a rectangular grid in every line of the Algorithm~\ref{alg: faulty-oriented}. After reaching at the corner (in line~\ref{line: oriented_at_the_corner}), it takes $\ell + n/\ell$ rounds to know the dimension of the grid. A robot $r_u$ can do this by moving from one corner to the diagonal corner. Therefore, the only change in the Algorithm~\ref{alg: faulty-oriented} is in the line~\ref{line: oriented_even_column} and line~\ref{line: oriented_odd_column} where $\frac{2C_r}{\ell}$ and $\frac{S_r}{\ell}$ robots are sent, respectively, across the longer column of the rectangular grid. This conveys that the rectangular grid's dispersion round and memory complexity would be $O(\ell)$ and $O(\log n)$, respectively. Since the diameter of the rectangular grid is $O(\ell)$, therefore, our algorithm is optimal w.r.t. round and memory complexity. The correctness and complexity proof directly follows from the square grid. Notice that this approach also does not require the dimension of the grid.



\subsection{Extension to Handle Weak Byzantine Robots}\label{app: weakly_Byzantine}
We discussed the dispersion of crash fault robots on an oriented square grid of $n = \sqrt{n} \times \sqrt{n}$ nodes in Algorithm~\ref{alg: faulty-oriented}. The Algortihm~\ref{alg: faulty-oriented} can be modified to work for a stronger adversary, i.e., a Byzantine adversary that controls the faulty (Byzantine) robots. Here we consider Weak Byzantine robots which can behave arbitrarily (e.g., deviate from the algorithm, pass wrong information, etc.) except to falsify their own identity (i.e., they cannot change or lie about their ID). 

In this Byzantine setting, the same Algorithm~\ref{alg: faulty-oriented} works correctly for an odd-sized grid. In the case of an even-sized grid, the algorithm~\ref{alg: faulty-oriented} can be extended to work but requires more memory for the robots to remember all the robots' IDs. For this, an $O(n \log n)$ bits of memory is required per robot. 
A robot $r_u$ with a higher ID settles at the next corner if the corner size (in terms of robots) is less than $n/4$ (excluding faulty robots in the knowledge of $r_u$), i.e., if a robot $r_u$ finds some robots have appeared at the previous corner then $r_u$ does not count those robots as the part of that particular corner. Recall that a robot has $n\log n$ memory, therefore, it keeps track of all the robots' that appeared in earlier locations. If the number of robots at a corner is less than $n/4$ then the Algorithm~\ref{alg: faulty-oriented} (in line~\ref{line: oriented_even_column}) executed as it is. For the case, when the number of robots is more than $n/4$ at a corner then the Lemma~\ref{lem: byzantine_even_case_settling} shows (and discusses the strategy due to space constraint) that a non-faulty robot $r_u$ does not settle with a non-faulty robot during the dispersion of Algorithm~\ref{alg: faulty-oriented} (in line~\ref{line: oriented_even_column}) in weak Byzantine setup.


\begin{lemma}\label{lem: byzantine_even_case_settling}
    In Byzantine dispersion, for an even value of $n$, a non-faulty robot $r_u$ does not settle with a non-faulty robot when the size of the corner is more than $n/4$.
\end{lemma}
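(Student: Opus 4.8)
The plan is to track a single non-faulty robot $r_u$ through the "even-$n$" branch of Algorithm~\ref{alg: faulty-oriented} (line~\ref{line: oriented_even_column}) and show that the column it is assigned to, together with the rank it holds inside that column, is determined by information that Byzantine robots cannot corrupt — namely the set of IDs that $r_u$ has itself observed, and the fact that weak Byzantine robots cannot lie about their own ID. First I would set up notation: after the gathering phase, $r_u$ sits at some corner $C$ holding a multiset of observed IDs; by the weak-Byzantine assumption every ID $r_u$ sees is genuine, and every non-faulty robot that is physically at $C$ is seen by $r_u$. The key observation is that $r_u$'s decision of whether to stay at $C$ or move on, and if it stays, which of the $\sqrt n/2$ columns emanating from $C$ it enters and in what order it settles, are all computed by $r_u$ from a deterministic rule applied to the sorted list of IDs it has recorded (discarding any ID it already saw at an earlier corner, as described in the paragraph preceding the lemma).

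The heart of the argument is the case the lemma singles out: a corner receives strictly more than $n/4$ robots. Here I would argue that even though Byzantine robots may crowd a corner or shuttle between corners to inflate counts, a non-faulty robot only ever acts on the relative order of the genuine IDs it has seen. Concretely, take two non-faulty robots $r_u$ and $r_w$ that end up in the same quarter-grid. I would show that if they were assigned the same column and the same within-column position, then they must have computed identical "effective corner rosters" restricted to non-faulty IDs and identical ranks — but two distinct non-faulty robots have distinct IDs, and the deterministic assignment rule ("lower ID nearer the column number", then disperse in increasing ID order along the column) is injective on any fixed set of distinct IDs. Hence their target nodes differ, so $r_u$ never settles on a node already occupied by a non-faulty robot. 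I would also need to handle the overflow past $n/4$: the excess genuine robots (those whose ID is not among the $n/4$ smallest at $C$) are forwarded, and since each non-faulty robot computes the same "is my ID in the bottom $n/4$ of the genuine IDs I've seen" test consistently as it accumulates knowledge across corners, no two non-faulty robots claim the same slot at any corner they both visit.

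The main obstacle I expect is the asynchrony-of-knowledge issue: a Byzantine robot can appear at corner $C_1$, be counted there by $r_u$, then race to corner $C_2$ and be counted again by $r_w$, so $r_u$ and $r_w$ have inconsistent views of who is "really" at which corner. The fix — and the step that needs the $O(n\log n)$ memory — is exactly the bookkeeping described before the lemma statement: each robot remembers every ID it has ever seen and where, and refuses to re-count an ID that already showed up at an earlier corner, so that the "genuine population" a non-faulty robot attributes to a corner is monotone and consistent in the only sense that matters, namely that the set of non-faulty IDs each robot attributes to corner $C$ is a subset of the true non-faulty population of $C$ at the moment that robot left $C$. I would make this precise as an invariant maintained across the $3\sqrt n$-round boundary walk, then conclude: since column-and-rank assignment is an injective function of a robot's own ID together with a consistent subset of genuine IDs, and non-faulty IDs are pairwise distinct, two non-faulty robots in the same quarter-grid receive distinct target nodes, proving the lemma. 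Potential subtlety to check along the way: a Byzantine robot could occupy a target node that a non-faulty robot is headed to; but a Byzantine robot is not a non-faulty robot, so this does not violate the statement (it only matters for the final correctness theorem, not for this lemma), and in any case the dispersion step lets the non-faulty robot proceed past a node it does not "recognize" as a settled non-faulty robot.
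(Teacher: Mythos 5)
Your proposal takes a genuinely different route from the paper, and it has a gap that matters. The paper's proof is about an explicit \emph{co-settling} mechanism: when a corner holds $n/4+x$ robots, the $x$ excess robots each \emph{declare} a partner robot (one they know to be faulty) with whom they will share a node, the corner then counts declared pairs as a single unit, and the lemma's content is precisely that in this pairing a non-faulty robot only ever ends up sharing a node with a robot that is faulty (a robot that refuses to declare is itself deemed faulty, and a pigeonhole argument handles $x \ge n/4$). Your proposal instead tries to prove the stronger statement that no two non-faulty robots are ever assigned the same target node, via injectivity of the column-and-rank rule on the roster of observed IDs, with excess robots simply forwarded to the next corner. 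That is not what the algorithm does in the over-full case, and it cannot be made to work as stated: since Byzantine robots can be counted at several corners, the sum of corner populations can exceed $n$, so a high-ID non-faulty robot may find \emph{every} corner apparently holding its full quota of $n/4$ lower IDs; your argument never says where such a robot finally settles, which is exactly the situation the co-settling mechanism (and hence the lemma) exists to handle.

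The injectivity step also does not go through on its own terms. You correctly identify that two non-faulty robots $r_u$ and $r_w$ can hold different rosters for the same corner (a Byzantine robot present while $r_u$ observed but absent while $r_w$ observed, or vice versa), and you propose the invariant that each robot's attributed roster is a subset of the true population. But injectivity of the rule ``sort the IDs, take your rank'' is only guaranteed on a \emph{fixed} ID set; applied to two different subsets it can assign the same column and the same within-column position to two distinct non-faulty IDs. So the invariant you state is too weak to conclude that target nodes differ. The paper sidesteps this entirely by not requiring distinct targets: it permits collisions and instead argues about who is allowed to collide with whom, using the declaration protocol and the fact that weak Byzantine robots cannot forge IDs. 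To repair your argument you would either need a genuinely common roster at each corner (which Byzantine mobility defeats) or you would need to adopt something like the paper's pairing mechanism, at which point the proof reduces to the paper's case analysis.
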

\begin{proof}
    Let us assume $r_u$'s ID is not among the first $n/4$ IDs on the corner and the size of the corner is $n/4+x$ (including faulty and non-faulty robots). In that case, either $r_u$ was already at the corner or $r_u$ came from some other corner and decided to settle. In the first case, if $r_u$ was already at the corner, then all the other $x$ robots would declare with whom they would be settled (in increasing order of their ID) since they know at least $x$ faulty robots on the corner. In that case, $r_u$ considers two or more robots settling together as one and finds its settling location based on the index in $1/4$-th of the grid. If any incoming, robot does not declare the robot with which it would settle, then that robot is considered as faulty. Notice that a non-faulty robot would never settle with a non-faulty robot and a non-faulty robot would disclose the faulty robots in the knowledge. In the second case, if $r_u$ came from another corner then $r_u$ is well aware of which robot is suitable to settle. There exists at least one robot, say $r_v$, whose ID is in the first $n/4$ robots and non-accompanied by the non-faulty robot (in $r_u$'s knowledge). Let us suppose this is not the case and $r_v$ is the only robot accompanied by some non-faulty robot. In that case, there are only $x-1$ faulty robots known to $r_u$ and $r_u$ would not settle at the corner. 
    
    On the other hand, if the $r_u$'s ID is among the first $n/4$ IDs on the corner and even if the size of the corner is $n/4+x$ (including faulty and non-faulty robots). In that case, also, either $r_u$ was already the corner or $r_u$ came from some other corner and decided to settle. In both cases, $r_u$ settles on the grid as per Algorithm~\ref{alg: faulty-oriented}. Additionally, in the second case, $r_u$ provides the list of at least $x$ faulty robots. Then there exists at least one robot, say $r_v$, whose ID is not among the first $n/4$ IDs, therefore, $r_v$ settles with $r_u$. If $r_u$ is the honest then $r_v$ is the faulty for sure or at least one among them is faulty. If there is more than one robot that has a higher ID than the first $n/4$ robots listed by $r_u$ then based on ID and availability the robot settles with $r_u$. If $x\geq n/4$ then there exist at least $\lfloor 2x/n \rfloor$ robots together which are faulty due to the pigeonhole principle (known to some $r_u$). Therefore, one other robot can settle there.
    Hence, a non-faulty robot never settles with a non-faulty robot and hence the dispersion takes place.
\end{proof}

From the above discussion, we have the following results.

\begin{theorem}\label{Thm: Byzantine_dispersion}
   Consider an oriented square grid of $n$ nodes, in which $n$ robots are placed arbitrarily (over the nodes) and $f$ of them are weak Byzantine ($f\leq n$). If each robot has memory of $O(n \log n)$ bits, then there is a deterministic algorithm that solves dispersion in $O(\sqrt{n})$ rounds. 
\end{theorem}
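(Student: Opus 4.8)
The plan is to establish Theorem~\ref{Thm: Byzantine_dispersion} by reducing it to the crash-fault algorithm (Algorithm~\ref{alg: faulty-oriented}) and arguing correctness separately for the two parity cases of $n$, invoking Lemma~\ref{lem: byzantine_even_case_settling} as the crux for the even case. First I would observe that the round complexity is inherited verbatim from Theorem~\ref{thm: crash_fault}: every phase of Algorithm~\ref{alg: faulty-oriented} (reaching the boundary, walking to a corner, learning $\sqrt{n}$, moving to the center or redistributing among corners, and column-dispersing) is a timed, synchronous routine whose schedule does not depend on what the faulty robots do. Weak Byzantine robots cannot delay a non-faulty robot because each non-faulty robot counts rounds on its own clock and acts at the predetermined round; the only thing a Byzantine robot can corrupt is the \emph{information content} it exposes during the Communicate phase, not the timing. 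Hence the $O(\sqrt{n})$ bound is immediate, and the only work is to show each non-faulty robot still ends at a distinct node.

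Next I would handle the odd-$n$ case, which I claim needs no modification at all. The argument from the crash-fault analysis is that $\sqrt{n}$ odd forces a \emph{unique} center node equidistant from all four corners, all non-faulty robots provably reach it, and from there the grid is partitioned into four congruent $\sqrt{n}/2\times\sqrt{n}/2$ quadrants with robots dispersing in column order by increasing ID. The key point is that in this case a non-faulty robot's final position is a deterministic function of its own ID and the \emph{set} of IDs it can verify are present (robots it is co-located with or has seen), and since Byzantine robots cannot falsify their IDs, a non-faulty robot can always locate itself in the sorted order of genuine IDs; any robot that refuses to reveal a consistent ID is treated as faulty/ignored. Two non-faulty robots computing their rank in the same genuine-ID multiset obtain distinct ranks, hence distinct target cells. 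I would spell this out as a short paragraph.

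The substance of the proof is the even-$n$ case, and here I would lean directly on Lemma~\ref{lem: byzantine_even_case_settling}. The extra $O(n\log n)$ memory is used so each non-faulty robot records the full set of IDs it has ever seen at earlier corners; when it arrives at a corner it discounts any robot it already saw elsewhere (such a robot is either retracing the boundary or is Byzantine), so it computes a corrected corner size. If that corrected size is at most $n/4$, the crash-fault logic of line~\ref{line: oriented_even_column} runs unchanged and correctness follows as before. If the corrected size exceeds $n/4$, Lemma~\ref{lem: byzantine_even_case_settling} guarantees that during the column-dispersion step a non-faulty robot never shares its final node with another non-faulty robot: whenever a robot $r_u$ with ID outside the first $n/4$ decides to settle, the lemma's pigeonhole/declaration argument exhibits either a slot freed by robots that declared a settling partner (hence are faulty, since non-faulty robots never co-settle) or, when the surplus $x$ is large, a guaranteed block of $\lfloor 2x/n\rfloor$ mutually co-located faulty robots, so $r_u$ can take an empty cell. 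Combining: in every quadrant the non-faulty robots occupy distinct cells, and across quadrants the cells are disjoint by construction, so the final configuration is a valid dispersion, and every non-faulty robot terminates after the fixed $O(\sqrt{n})$-round schedule.

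The main obstacle I anticipate is making the even-$n$ accounting airtight: a Byzantine robot can appear at a corner, leave, reappear, or impersonate a ``returning'' robot, so I must argue that the memory-based deduplication plus the ``declare your settling partner or be deemed faulty'' rule leaves \emph{every} non-faulty robot with a consistent and mutually compatible view of which cells are taken — in particular that two non-faulty robots never both claim the same cell under adversarial information. This is exactly what Lemma~\ref{lem: byzantine_even_case_settling} is engineered to give, so the theorem's proof is essentially: (i) timing is adversary-independent, hence $O(\sqrt{n})$; (ii) odd case reduces to the crash-fault correctness using unforgeable IDs; (iii) even case reduces to the crash-fault correctness together with Lemma~\ref{lem: byzantine_even_case_settling}; therefore dispersion is achieved and the algorithm terminates.
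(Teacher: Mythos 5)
Your proposal matches the paper's own argument essentially step for step: the paper likewise derives the $O(\sqrt{n})$ bound directly from the timed schedule of Algorithm~\ref{alg: faulty-oriented}, observes that the odd-$n$ case needs no modification, and handles the even-$n$ case by having each robot use its $O(n\log n)$ bits to record all IDs seen at earlier corners and then invoking Lemma~\ref{lem: byzantine_even_case_settling} for corners holding more than $n/4$ robots. Your write-up is, if anything, more explicit than the paper's (which presents the theorem as following ``from the above discussion''), so no substantive difference or gap to report.
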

\begin{remark}
    It is easy to see that the same algorithm (Algorithm~\ref{alg: faulty-oriented}) works for any $k$ robots' dispersion on an $n$-node grid, where $k\leq n$. Thus, the result in Theorem~\ref{Thm: Byzantine_dispersion} also holds for $k$ robots' dispersion, where $f\leq k$ are weak Byzantine. In addition, this weak-Byzantine dispersion result can be extended for the rectangular grid similar to the extension shown for crash fault robots in Section~\ref{sec: rectangular_grid}.
\end{remark}
 
\section{Conclusion and Future Work}\label{sec: conclusion}
In this paper, we studied fault-tolerant dispersion for distinguishable mobile robots on port-labeled square and rectangular grids. 
We designed and analyzed algorithms on oriented grids under two types of faults. For crash faults, our algorithm achieved a round complexity of $O(\sqrt{n})$ and a memory complexity of $O(\log n)$ bits per robot. For weak Byzantine faults, the same round complexity of $O(\sqrt{n})$ was maintained, while the memory requirement increased to $O(n \log n)$ bits per robot. Some open questions raised by our work: (i) Is it possible to achieve dispersion of $k \leq n$ robots in $O(\sqrt{k})$ rounds? (ii) Would similar bounds hold in the presence of strong Byzantine failures, where Byzantine robots could change their IDs.

\bibliographystyle{plain}
\bibliography{reference}


\end{document}